\documentclass{article}

\usepackage[preprint]{neurips_2026}
\workshoptitle{AXIOM: Foundations of Efficient Deep Learning at NeurIPS 2026}

\usepackage[utf8]{inputenc}
\usepackage[T1]{fontenc}
\usepackage{amsmath,amssymb,amsthm}
\usepackage{booktabs}
\usepackage{graphicx}
\usepackage{float}
\usepackage{microtype}
\usepackage{url}
\usepackage{hyperref}
\hypersetup{
  hidelinks,
  pdfauthor={Harshit Verma, Rex Ying},
  pdftitle={Component-Weighted Centroid Search for Exact Incremental BPE}
}

\newtheorem{theorem}{Theorem}
\newtheorem{lemma}{Lemma}
\newtheorem{proposition}{Proposition}

\newcommand{\contextbyte}{\mathtt{@}}

\title{Component-Weighted Centroid Search for Exact Incremental BPE}
\author{%
  Harshit Verma \\
  Yale University \\
  \texttt{harshit.verma@yale.edu}
  \And
  Rex Ying \\
  Yale University \\
  \texttt{rex.ying@yale.edu}
}

\begin{document}
\maketitle

\begin{abstract}
Exact incremental BPE maintains the canonical tokenization state after every appended byte.  The recent algorithm of \citet{jiang2026incremental} does this in $O(\log^2 t)$ worst-case time, where $t$ is the maximum canonical token length.  Its centroid search visits $O(\log t)$ components and can pay another $O(\log t)$ for ordered point location at each one.  Within Jiang and Gong's normalized/proper merge-stage model, we change only that local search.  Each interval is weighted by the size of the recursive component it selects, so a move from size $m$ to size $m'$ costs $O(1+\log(m/m'))$.  These charges telescope, giving $O(\log t)$ time per append and $O(n\log t)$ over an $n$-byte stream, with the same BPE semantics and asymptotic space.  We also construct a normalized proper BPE family over a fixed alphabet where count-balanced search uses $\Theta(\log^2 t)$ probes on a reachable update, while the weighted search uses $\Theta(\log t)$.  A Rust implementation matches the predicted probe counts on every tested instance.  On ordinary vocabularies the queried degrees are small, however, and the improvement is a worst-case guarantee rather than an average-speed result.
\end{abstract}

\section{Introduction}

Byte-pair encoding (BPE) is a standard subword layer in language-model pipelines \citep{sennrich2016bpe}.  Exact incremental tokenization is relevant when a serving session repeatedly appends bytes to a growing prompt or agent trace and needs tokenizer state without retokenizing the complete history.  In the incremental version, the input arrives one byte at a time and the tokenizer must retain a last-token/backpointer state after every append.  The full canonical tokenization of any prefix can then be recovered from that state.  \citet{jiang2026incremental} find the longest canonical suffix with augmented Aho--Corasick matching \citep{aho1975} and search its suffix-successor tree using a centroid decomposition.  Their bound has two logarithms:
\[
  \underbrace{O(\log t)}_{\text{centroid levels}}
  \;\times\;
  \underbrace{O(\log t)}_{\text{ordered branch search}}
\]
The second logarithm comes from balancing each local search by the \emph{number} of intervals.  But the choices are not equally consequential: one interval may leave a large recursive component while the next leaves a single node.  We instead weight an interval by the component left after choosing it.  Large remaining subproblems are encountered early, and the cost of a choice is charged to the drop in component size.  Matching, validity intervals, and the centroid decomposition are untouched.

There are two points to establish.  First, the local charges really do telescope, including gaps and failed searches.  Second, the bad count-balanced case must be possible for a normalized BPE dictionary and on an update the tokenizer can actually reach.  We prove both and check the construction against an offline heap implementation.

\paragraph{Related work.}
\citet{berglund2023formalizing} formalize canonical BPE, give a general local-update procedure, and show that dictionary-dependent finite lookahead suffices for left-to-right streaming; \citet{cognetta2025fst} further develop finite-state representations.  \citet{mamouras2026streaming} derive bounded-delay streaming with input-independent memory, while TokTier repairs session appends with widening and fallback \citep{zhang2026toktier}.  These interfaces differ from retaining Jiang and Gong's exact state after every prefix; we improve that framework.  Weight-sensitive and biased ordered search are classical \citep{mehlhorn1975}; our contribution is to weight each interval by its recursive CST component, so the local ratio charges telescope across the full centroid path.  The lower bound concerns these two CST schemes, not every incremental tokenizer.

\section{Component-weighted centroid search}
\label{sec:method}

In the normalized/proper model of \citet{jiang2026incremental}, the canonical suffixes of the current matched token form a suffix-successor tree.  A candidate is valid when a retained prefix state lies in its precomputed DFS interval, valid candidates form one root-to-node path, whose deepest node is the new last token.  The centroid search tree (CST) recursively decomposes this tree to locate that endpoint.  We use two inherited properties: every recursive CST child has at most half the current component's size, and a failed downward interval lookup is terminal.  At state $(C,u)$, the current component is $C$, its centroid is $u$, and $m=|C|$.  When $u$ is valid, its downward neighbors have disjoint ordered intervals $I_i=[L_i,R_i)$.  Choosing $I_i$ enters a component $C_i$ of $C\setminus\{u\}$.  Give that interval weight $w_i=|C_i|$ and write $W=\sum_iw_i$.  The components are disjoint, so $W\le m-1$, and the centroid property gives $w_i\le m/2$.

We store the intervals in an alphabetic tree.  At each subtree, choose a pivot whose strict left and right sides each have at most half the current weight, and recurse without changing the interval order.  At pivot $[L_p,R_p)$, a query $x$ goes left when $x<L_p$, right when $x\ge R_p$, and returns $p$ otherwise.  This is the same interval map as before, including queries that land in a gap.

An empty list fails without a probe in $O(1)$ time; below, $W\ge1$.

\begin{lemma}[Local ratio bound]
\label{lem:local}
A successful query for $I_i$ inspects at most $1+\lceil\log_2(W/w_i)\rceil$ interval records.  An unsuccessful query inspects at most $1+\lceil\log_2 W\rceil$ records.
\end{lemma}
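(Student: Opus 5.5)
The plan is to track the total weight of the alphabetic subtree visited at each depth of the search. Let $S_0$ be the set of all intervals, with weight $W$. Let $S_d$ be the set of intervals in the subtree reached after $d$ pivot inspections. By the pivot rule, the strict left and right sides at any subtree each carry at most half that subtree's weight. Writing $w(S)$ for the total weight of a set of intervals, we get $w(S_{d})\le w(S_0)/2^{d}=W/2^d$ whenever the search is still descending at depth $d$.

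\textbf{Successful queries.} Suppose the query returns $I_i$ at its $(k+1)$-st inspection, so $I_i$ was the pivot of subtree $S_k$. Before that step the search descended $k$ times into strict sides. Each subtree on the path contains $I_i$, so $w_i\le w(S_k)\le W/2^k$. Since $w_i=|C_i|\ge 1>0$, this gives $2^k\le W/w_i$, hence $k\le\log_2(W/w_i)$. As $k$ is an integer, $k\le\lfloor\log_2(W/w_i)\rfloor\le\lceil\log_2(W/w_i)\rceil$. The number of inspected records is therefore $k+1\le 1+\lceil\log_2(W/w_i)\rceil$.

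\textbf{Unsuccessful queries.} These need care, because a failed search ends at an empty subtree and no target interval supplies a positive lower bound. Suppose the search inspects pivots at depths $0,\dots,k$ and then steps into an empty side. The subtree $S_k$ is nonempty, since it has a pivot, and every interval weight is at least $1$. Hence $1\le w(S_k)\le W/2^k$, which gives $k\le\lfloor\log_2 W\rfloor$. The number of inspections is $k+1\le 1+\lceil\log_2 W\rceil$. The same bound handles a gap that is detected at a pivot because no pivot contains $x$: the search keeps descending until it reaches an empty side.

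\textbf{Main obstacle.} The main point to verify is that a halving pivot always exists, so that the invariant $w(S_d)\le W/2^d$ actually holds. I would prove this by a weighted-median argument. Scan the intervals in order and take the first index $p$ whose prefix weight, including $I_p$, reaches at least half the subtree weight. The strict left side then has weight less than half. The strict right side has weight at most the total minus the prefix through $p$, which is at most half.

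This construction relies on every weight being positive, which holds because each downward neighbor lies in its own component, so $w_i=|C_i|\ge1$. Positivity is used twice: in this pivot construction and in the failure-case lower bound $w(S_k)\ge1$.
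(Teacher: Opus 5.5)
Your proof is correct and follows the paper's argument essentially step for step. The weighted-median pivot (the appendix's Weighted pivot lemma) gives the halving invariant $w(S_d)\le W/2^d$, and lower-bounding the weight of the last probed subtree by $w_i$ or by $1$ gives the two bounds, exactly as in the Weighted depth lemma.

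One small correction to your closing remark: the pivot exists for merely nonnegative weights. Positivity is really used in the two depth bounds, to divide by $w_i$ and to get $w(S_k)\ge 1$, and not in the pivot construction.
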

\begin{proof}
Every failed pivot leaves at most half the previous weight.  Before a successful probe the remaining subtree still contains weight $w_i$; before a final unsuccessful probe it contains at least one unit.  The two bounds follow from these halving inequalities.
\end{proof}

\begin{theorem}[Exact $O(\log t)$ updates]
\label{thm:main}
Under the model of \citet{jiang2026incremental}, component-weighted CST search returns the same exact last token and takes $O(\log t)$ worst-case time per appended byte.  Retaining prefix states over an $n$-byte stream takes $O(n\log t)$ time.
\end{theorem}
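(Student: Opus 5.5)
The proof splits into correctness and cost. For correctness, we only need that the weighted alphabetic tree implements the same interval map as Jiang and Gong's count-balanced search. The intervals $I_i$ are disjoint and stored in their original order, and each pivot sends $x<L_p$ left, $x\ge R_p$ right, and returns $p$ otherwise. Induction on subtree height therefore shows that a query returns $i$ exactly when $x\in I_i$, and fails exactly when $x$ lies in a gap or outside all intervals. The validity test at each centroid, the choice of recursive component, and the termination rule are then identical to the inherited algorithm. Matching and validity intervals are untouched. Hence the CST walk visits the same sequence of components and returns the same exact last token.

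For the per-append time, we fix one append and write the CST walk as states $(C_0,u_0),(C_1,u_1),\dots,(C_k,u_k)$ with sizes $m_0\ge m_1\ge\cdots\ge m_k\ge1$. The root component has size $m_0\le t$, since it is a suffix-successor tree of the canonical suffixes of a token of length at most $t$. Each step $j<k$ is a successful lookup at a valid centroid that enters $C_{j+1}$. By Lemma~\ref{lem:local}, with $w=m_{j+1}$ and $W\le m_j-1<m_j$, this step costs at most $2+\log_2(m_j/m_{j+1})$ probes, plus $O(1)$ for the validity check and bookkeeping. The final step either stops at an invalid centroid, hits an empty list, or makes an unsuccessful query. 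Each of these costs at most $O(1)+1+\lceil\log_2 m_k\rceil$, since a failed downward lookup is terminal.

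Summing these costs, the ratio terms telescope:
\[
\sum_{j<k}\Bigl(2+\log_2\frac{m_j}{m_{j+1}}\Bigr)+O(1)+\log_2 m_k \;=\; O(k)+\log_2 m_0 .
\]
The centroid property gives $m_{j+1}\le m_j/2$, so $k\le\log_2 t+1$, and the total is $O(\log t)$. This telescoping step is the main obstacle. The delicate points are the final unsuccessful or gap query, whose cost $\log m_k$ is absorbed because the telescoped sum stops at $\log m_0-\log m_k$, and the $+1$ from the ceilings, which is absorbed by $k=O(\log t)$. I would check that every terminating case, including an invalid centroid with no lookup at all, is covered by the same $O(1)+\log_2 m_k$ charge. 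The remaining per-append work is Aho--Corasick advancement and state retention. I would cite Jiang and Gong's bounds for these: they are amortized or worst-case $O(1)$ or $O(\log t)$ in their model, and I need that this part is worst-case as stated there.

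Finally, building the weighted alphabetic trees is preprocessing. Their size equals the number of stored intervals, so asymptotic space is unchanged. Summing the $O(\log t)$ worst-case bound over $n$ appends gives $O(n\log t)$ for the whole stream.
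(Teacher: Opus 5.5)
Your correctness argument (the weighted tree implements the same interval map, so the CST walk is unchanged) matches the paper's. So do most pieces of your cost argument: the potential $\log_2 m_j$, $W\le m_j-1$, a single terminal miss, and $m_0\le t$. The gap is in your description of the walk. You assert that every non-final step $j<k$ is a successful lookup at a valid centroid, and you list ``stops at an invalid centroid'' among the terminating cases. In the inherited search an invalid centroid is not terminal in general. Valid nodes form a root path, so an invalid centroid $u$ means the answer is a proper ancestor of $u$. The search then makes an \emph{upward} transition into the component of $C\setminus\{u\}$ containing $u$'s parent, and stops only if no such component remains. Only a failed \emph{downward} lookup is guaranteed terminal. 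The actual walk therefore interleaves upward and downward moves. Your telescoping sum is written as if every step were a successful interval search, so as stated it does not cover the real sequence of states.

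The repair is short, and it is exactly the extra sentences in the paper's proof. An upward step performs no interval search and costs $O(1)$ for the validity test and component pointer. It still satisfies $m_{j+1}\le m_j/2$, so its cost is also at most $O(1)+\log_2(m_j/m_{j+1})$, since that drop is nonnegative. Equivalently, as the paper does, apply the ratio bound $2+\log_2 m_j-\log_2 m_{j+1}$ only to the successful downward moves. The drops omitted at upward moves are nonnegative, so the partial sum is at most $\log_2 m_0-\log_2 m_r\le\log_2 t$. Then charge validity tests and upward moves $O(1)$ per level, which is $O(\log t)$ in total because there are $O(\log t)$ levels. With that case added, your bound goes through. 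That includes your slightly sharper treatment of the terminal miss as $1+\lceil\log_2 m_k\rceil$, absorbed by the telescoped sum; the paper simply bounds the miss by $O(\log t)$.
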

\begin{proof}
Let $C_0,\ldots,C_r$ be the visited components, $m_j=|C_j|$, and $\Phi_j=\log_2m_j$.  Every transition enters a component of at most half the current size, so $r=O(\log t)$.  On a successful downward move from $C_j$ to $C_{j+1}$, the chosen interval has weight $m_{j+1}$.  Lemma~\ref{lem:local} gives
\[
  q_j\le 1+\left\lceil\log_2\frac{W_j}{m_{j+1}}\right\rceil
  \le 2+\Phi_j-\Phi_{j+1}.
\]
The successful downward moves are a subset of all recursive CST transitions.  Every omitted potential drop is nonnegative, so summing over the successful moves is at most $\Phi_0-\Phi_r\le\log_2t$.  Their additive constants contribute another $O(r)$.  Validity tests and upward transitions cost $O(1)$ per level.  By the inherited monotonic-path search, a failed downward lookup terminates the traversal, so there is at most one such lookup, costing $O(\log t)$.  Initially there is at most one node for each nonempty suffix of the matched token $\tau$, hence $m_0\le|\tau|\le t$.

Because the weighted tree evaluates the same ordered intervals, it returns the same interval or failure at every centroid; automaton and history operations are unchanged.  Thus the substitution preserves exact BPE semantics.
\end{proof}

With sorted intervals and prefix sums, the straightforward construction takes $O(d\log d)$ time and $O(d)$ temporary space.  It retains one constant-size descriptor per interval.  Appendix~\ref{app:weighted} gives the full point-location contract and construction.

\section{A reachable separation under BPE semantics}
\label{sec:witness}

A skewed list of abstract weights is not enough for a BPE lower bound.  The intervals have to come from a normalized proper dictionary, in the right order, and the expensive path must occur on a real stream update.  The following family does this.

\begin{proposition}[Reachable family]
\label{prop:witness}
For every $k\ge1$, there is a normalized proper BPE dictionary over a fixed byte alphabet with a reachable update whose CST, under Jiang and Gong's strict-half centroid convention, follows recursively nested heavy components.  Its maximum canonical token length is $t=\Theta(k2^k)$; count-balanced search uses $\Theta(k^2)=\Theta(\log^2t)$ interval probes, while component-weighted search uses $\Theta(k)=\Theta(\log t)$.
\end{proposition}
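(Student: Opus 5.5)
The plan is to build the family in two layers. First, an abstract target shape for the suffix-successor tree, on which both probe counts are easy to compute. Second, a BPE dictionary that realizes this shape together with a reachable update.

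\textbf{Target shape.} Fix levels $j=0,\ldots,k-1$. The component $C_j$ at level $j$ has a centroid $u_j$ with a heavy downward component $C_{j+1}$ and $d_j=2^{k-j}$ light downward components. Each light component has size about $k$; a path of length about $k$ hanging off $u_j$ suffices. Sizes are tuned so that $|C_{j+1}|$ is just below $|C_j|/2$ under the strict-half convention. Concretely, $m_j=\Theta(k2^{k-j})$ and the light part of level $j$ carries the other half of the weight. Then $t\ge m_0=\Theta(k2^k)$, and the longest token has exactly this length, so $\log t=\Theta(k)$.

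The interval order is then chosen so that the heavy child's DFS interval is extreme (last) among the $d_j+1$ downward intervals at $u_j$. The valid path of the update should run through every $u_j$ into $C_{j+1}$, ending in the innermost component.

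\textbf{Probe counts on this shape.}
\begin{itemize}
\item \emph{Count-balanced search.} A balanced tree on $d_j+1$ intervals must spend $\Theta(\log d_j)=\Theta(k-j)$ probes to reach the extreme interval. Since the heavy interval is at a leaf of maximal depth up to an additive constant, this gives a lower bound as well as an upper bound. Summing over $j$ gives $\Theta(k^2)=\Theta(\log^2 t)$.
\item \emph{Component-weighted search.} The heavy interval has weight $W_j/w=O(1)$ relative to the total, so Lemma~\ref{lem:local} gives $O(1)$ probes per level, hence $O(k)$. There is also a trivial $\Omega(k)$, because the CST path has $k$ levels and each successful downward move costs at least one probe.
\end{itemize}
Since $\log t=\Theta(k)$, these are exactly the claimed $\Theta(\log^2 t)$ versus $\Theta(\log t)$.

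\textbf{Realization in BPE (main obstacle).} I would realize the shape with a fixed alphabet, say $\{a,b,\#\}$ plus a context byte $\contextbyte$. The long token $\tau$ is built so that its canonical suffixes branch as required.
\begin{itemize}
\item \emph{Spine.} The heavy spine is a sequence of suffixes $\tau=s_0\supset s_1\supset\cdots$. Each $s_{j}$ is obtained from $s_{j+1}$ by a block of merges.
\item \emph{Light branches.} Each light branch at level $j$ is a suffix family indexed by a $(k-j)$-bit binary code written in $a,b$ and padded with $\#$ to length $k$. This gives $2^{k-j}$ distinct light siblings from a fixed alphabet, each of size about $k$.
\item \emph{Merge ranks.} Ranks are assigned level by level, deepest first. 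This makes the dictionary normalized and proper: every token is the merge of its two canonical parts, and merge priorities are consistent with the canonical tokenizations. The lexicographic encoding of the codes, with the heavy continuation using the largest symbol, forces the heavy interval to be last in DFS order.
\end{itemize}
The reachable update is the append of the final byte of a stream whose prefix state lies inside the DFS interval of the innermost heavy node. I would produce that stream explicitly as a context prefix followed by $\tau$, so every validity test along the spine succeeds. Checking that the canonical tokenization indeed has this last token is the delicate part. I would verify it inductively on $k$ by simulating the heap-based merge order on the constructed prefix, and cross-check it against the offline heap implementation.

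I expect the realization step to be the main obstacle. Three things have to hold simultaneously:
\begin{itemize}
\item normalization and properness of the merge list;
\item the exact subtree sizes, so that centroids land on the spine under the strict-half rule;
\item the ordering that places the heavy interval extreme.
\end{itemize}
The fallback, if sizes drift, is to pad light branches by $O(1)$ nodes per level. This changes the counts only by constants.
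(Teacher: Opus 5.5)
Your probe accounting on the abstract shape is correct and matches the paper's: a heavy child of about half the size, exponentially many light siblings, and the heavy interval placed at an extreme. That gives $\Theta(k-j)$ versus $O(1)$ probes at each of $\Theta(k)$ levels. The gap is the BPE realization, which is the actual content of the proposition, and two of your concrete claims there do not hold as stated.

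First, the validity intervals at $u_j$ are ordered by the DFS coordinates of the left operands $\operatorname{pre}(c)$ in the global successor forest, not by a lexicographic order on code strings. Giving the heavy continuation the largest symbol therefore does not by itself control where its interval falls. The paper builds the operands of all children of one centroid right to left, $B_p=(a_p,B_{p+1})$, so they form a single ancestor chain $B_{q-1}\leftarrow\cdots\leftarrow B_b$. Preorder is then monotone along the chain whatever the sibling convention, and the heavy child, whose operand $B_b$ is deepest, comes last.

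Second, the children of $u_j$ are nested suffixes $x_iu_j$ of one string, fixed by their positions, so they cannot be freely ``indexed by codes.'' Hanging a length-$k$ path below each light child also conflicts with the chain. The path node $\alpha x_1u_j$ needs $(x_1,u_j)$ to fire before $(\alpha,x_1)$. A farther sibling whose operand is built through $\alpha x_1$ needs the opposite order, and both strings run identical merges on the shared suffix until $\alpha x_1$ forms. Some other bracketing might work, but you give none. The paper sidesteps this by using leaves and getting $t=\Theta(k2^k)$ from the macro length $|A_p|=k+2$, with $A_p=\#\operatorname{bin}_k(p)\$$, rather than from the node count.

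The remaining ingredients are also missing. Ordering ranks level by level is not a canonicality argument. If ``deepest first'' means innermost components first, it is backwards: each $U_p$ needs $U_{\pi(p)}$ to exist, so target rules must run from the root outward, as in the paper's decreasing-$p$ Phase 3.

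You need each rule to fire only at its intended occurrence. The paper gets this from distinct adjacent pairs or distinct macros, with code rules that never cross a $\$\#$ boundary; $\#$-padding of variable-length codes does not obviously give that. You also need that no unintended vocabulary token is a suffix of $\tau$, since each such token is an extra tree node and can shift the centroids. The paper uses a unique terminal symbol for this.

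Finally, your update is not pinned down. A context prefix that does not interact with $\tau$ leaves $\tau$ itself as the answer, and you never say where $\tau$ sits in your shape. The paper ranks a reserved merge $(255,a_0)$ between the block and target phases, so on $255S$ it destroys exactly the final merge of $U_0=\tau$. The Aho--Corasick candidate is then invalid, while $U_1$, the bottom of the heavy path, is the true last token. This forces a successful heavy-interval query at every level, followed by a single degree-one miss.
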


Start with a one-node tree $T_0$.  To make $T_j$, add a new root with heavy child $T_{j-1}$ and $2^{j-1}-1$ leaf children.  Jiang and Gong's construction descends from the rooted component only into a child strictly larger than half the component.  The heavy child has size exactly $|T_j|/2$, so the new root is the selected centroid, and its downward component weights are
\[
  [\underbrace{1,\ldots,1}_{2^{j-1}-1},\;2^{j-1}].
\]
We represent each node by a canonical suffix token $U_p$ of one target string and each edge by an exact final merge.  At level $j$, let $b$ be the heavy-child index and $q$ the root index.  The left operands $B_p$ of the final merges form the successor-forest chain
\[
 B_{q-1}\leftarrow B_{q-2}\leftarrow\cdots\leftarrow B_b,
 \qquad
 \operatorname{dfs}(B_{q-1})<\cdots<\operatorname{dfs}(B_b).
\]
The validity intervals therefore appear as $U_{q-1},\ldots,U_b$ with weights $[1,\ldots,1,2^{j-1}]$.  The heavy branch sits at the extreme end.  Count balancing spends $\Theta(j)$ probes to reach it, while the weighted tree tests it first.

A reserved left-context merge invalidates the longest suffix on the final append but leaves the next suffix valid, forcing the CST down the heavy components.  Delimiter-protected binary macros turn the executable byte-label construction into an unbounded fixed-alphabet family.  Appendices~\ref{app:witness}--\ref{app:macro} contain the canonicality, interval-order, reachability, and properness details.

In the executable instances $t=2^k$.  The pinned predecessor convention uses exactly $k(k+1)/2+1$ probes, versus $k$ for the weighted tree.  This exact count is implementation-specific; the asymptotic separation only needs the selected interval to remain extreme.

\begin{figure*}[t]
  \centering
  \includegraphics[width=0.88\textwidth]{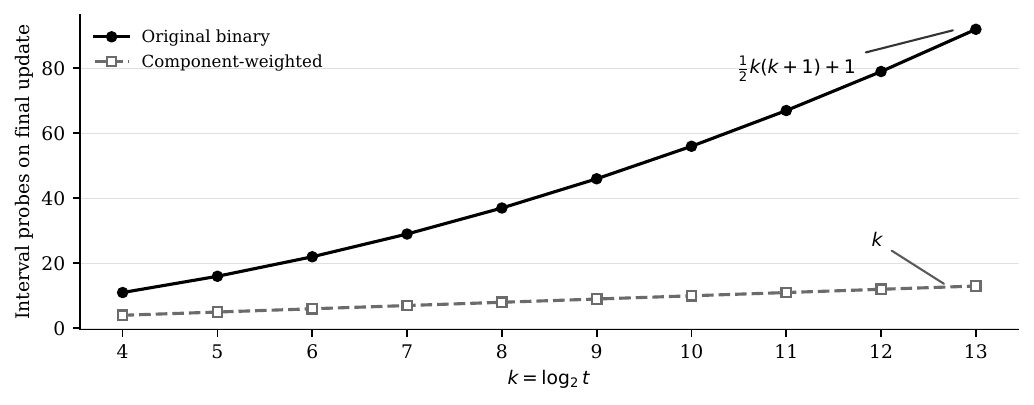}
  \vspace{-1mm}
  \caption{Exact interval-record probes on the reachable final update.  Markers are measurements and curves are the proved formulas for the executable family.  At $t=8192$, count-balanced search uses 92 probes and component-weighted search uses 13.}
  \label{fig:scaling}
\end{figure*}

\section{Evaluation}
\label{sec:eval}

\paragraph{Setup.}
We implemented the exact incremental tokenizer with either the original
count-balanced search or our component-weighted search; all other components
are shared.  We measure interval-record probes on constructed examples with
$t=16,\ldots,8192$ and on GPT-2, RoBERTa-base, and GPT-NeoX-20B vocabularies
using one million bytes each of English and Rust code.  These runs use
continuous-byte BPE without model-specific pretokenization, normalization, or
special-token handling.  Experiments use an Apple M4, Rust 1.97.1, and release
builds.  Corpus timings use one warm-up and five repetitions; final-update
timings use one warm-up batch and 31 repetitions.  Code and reproduction
artifacts will be released.

\paragraph{Constructed example.}
The constructed example in Section~\ref{sec:witness} is designed to expose
the worst-case behavior of the original search.  Figure~\ref{fig:scaling}
matches the predicted quadratic-versus-linear dependence on $k$.  At
$t=8192$, the weighted method reduces the probe count from 92 to 13.  Median
final-update time changes only slightly, from $10.18\,\mu$s to
$9.73\,\mu$s, because interval search is only part of an update.  The
generator verifies the required tokens, successor edges, suffixes, interval
order, component weights, and forced update.  Every reachable prefix of the
constructed input is also checked against offline heap BPE, and randomized
tests cover gaps, boundaries, empty lists, and equal and skewed weights.

\paragraph{Ordinary vocabularies.}
Both methods return the same token after every corpus byte and pass 1,470
retained-state audits against offline BPE.  Across GPT-2, RoBERTa-base, and
GPT-NeoX-20B on English and Rust code, the queried degree never exceeds five.
Component-weighted search reduces the conditional p99 probe count from five
to two, but complete-update time is $0.3\%$--$13.1\%$ slower because the
original search is already shallow.  We therefore do not claim an
average-throughput improvement.  The weighted structure also increases
retained memory from about $38.3$--$38.5$\,MB to $44.6$--$44.7$\,MB.
A fixed-threshold hybrid can scan small lists and use weighted search only for
larger lists, preserving the $O(\log t)$ worst-case bound while avoiding the
weighted-search overhead on all sampled ordinary-corpus queries.
Full timing and probe distributions are reported in
Appendix~\ref{app:evaluation}.

\section{Scope and conclusion}

The constructed family is designed to produce worst-case behavior.  Its
large interval-list degrees do not occur in the six ordinary-corpus runs.
The separation concerns the original count-balanced and proposed
component-weighted CST searches, not every exact incremental BPE algorithm.
The directly executable examples are limited by packed implementation fields;
the fixed-alphabet construction provides the unbounded mathematical family.

Within Jiang and Gong's exact every-prefix framework, weighting each interval
by the recursive component it selects makes the search costs telescope across
CST levels.  This reduces worst-case update time from $O(\log^2 t)$ to
$O(\log t)$ without changing the returned token.  The experiments support
the predicted scaling while showing that the method provides worst-case
protection rather than a general tokenizer speedup.

\clearpage
\bibliographystyle{plainnat}
\bibliography{references}

@article{aho1975,
  author  = {Aho, Alfred V. and Corasick, Margaret J.},
  title   = {Efficient String Matching: An Aid to Bibliographic Search},
  journal = {Communications of the ACM},
  volume  = {18},
  number  = {6},
  pages   = {333--340},
  year    = {1975},
  doi     = {10.1145/360825.360855}
}

@article{berglund2023formalizing,
  author  = {Berglund, Martin and van der Merwe, Brink},
  title   = {Formalizing {BPE} Tokenization},
  journal = {Electronic Proceedings in Theoretical Computer Science},
  volume  = {388},
  pages   = {16--27},
  year    = {2023},
  doi     = {10.4204/EPTCS.388.4}
}

@article{debruijn1946,
  author  = {de Bruijn, Nicolaas Govert},
  title   = {A Combinatorial Problem},
  journal = {Proceedings of the Section of Sciences of the Koninklijke Nederlandse Akademie van Wetenschappen te Amsterdam},
  volume  = {49},
  pages   = {758--764},
  year    = {1946}
}

@inproceedings{jiang2026incremental,
  author        = {Jiang, Shenghu and Gong, Ruihao},
  title         = {Incremental {BPE} Tokenization},
  booktitle     = {Proceedings of the 43rd International Conference on Machine Learning},
  series        = {Proceedings of Machine Learning Research},
  volume        = {306},
  publisher     = {PMLR},
  year          = {2026},
  note          = {Spotlight},
  eprint        = {2605.30813},
  archivePrefix = {arXiv},
  primaryClass  = {cs.CL}
}

@article{mehlhorn1975,
  author  = {Mehlhorn, Kurt},
  title   = {Nearly Optimal Binary Search Trees},
  journal = {Acta Informatica},
  volume  = {5},
  pages   = {287--295},
  year    = {1975},
  doi     = {10.1007/BF00264563}
}

@article{mamouras2026streaming,
  author    = {Mamouras, Konstantinos and Li, Angela W. and Yang, Yudi},
  title     = {An Efficient Algorithm for Streaming {BPE} Tokenization},
  journal   = {Proceedings of the ACM on Programming Languages},
  volume    = {10},
  number    = {PLDI},
  articleno = {252},
  pages     = {2085--2108},
  year      = {2026},
  month     = jun,
  doi       = {10.1145/3808330}
}

@misc{zhang2026toktier,
  author        = {Zhang, Zhenyu and Cao, Zhichao},
  title         = {{TokTier}: Exact Stateful {CPU+GPU} Tokenization for Agentic {LLM} Serving},
  year          = {2026},
  eprint        = {2607.29678},
  archivePrefix = {arXiv},
  primaryClass  = {cs.CL},
  note          = {arXiv:2607.29678}
}

@article{cognetta2025fst,
  author  = {Cognetta, Marco and Okazaki, Naoaki},
  title   = {Tokenization as Finite-State Transduction},
  journal = {Computational Linguistics},
  volume  = {51},
  number  = {4},
  pages   = {1119--1149},
  year    = {2025},
  doi     = {10.1162/coli.a.23}
}

@inproceedings{sennrich2016bpe,
  author    = {Sennrich, Rico and Haddow, Barry and Birch, Alexandra},
  title     = {Neural Machine Translation of Rare Words with Subword Units},
  booktitle = {Proceedings of the 54th Annual Meeting of the Association for Computational Linguistics (Volume 1: Long Papers)},
  address   = {Berlin, Germany},
  publisher = {Association for Computational Linguistics},
  pages     = {1715--1725},
  year      = {2016},
  month     = aug,
  doi       = {10.18653/v1/P16-1162}
}

\clearpage
\appendix

\section{Formal model and weighted point location}
\label{app:weighted}

\paragraph{Exact contract.}
We inherit normalized/proper BPE, canonical tokens, suffix-successor trees, constant-time history access, and the monotonic-path theorem from \citet{jiang2026incremental}.  After every append, the online operation returns the exact final token of the canonical BPE tokenization and stores a token/backpointer chain.  The complete token sequence is recoverable from that chain but is not recopied after every prefix.  Jiang and Gong's eager-output extension adds the cost of emitted output; our substitution leaves that mechanism unchanged.

The computational model is the same word RAM, including constant-time automaton transitions and history queries.  A component size is an integer in $[1,t]$ and occupies one $O(\log t)$-bit word.

Throughout, $O(\log t)$ abbreviates $O(1+\log t)$ in the degenerate case $t=1$.

\begin{lemma}[Weighted pivot]
For positive ordered weights $w_1,\ldots,w_d$ with total $W$, there is an index $p$ whose strict left and right weights are both at most $W/2$.
\end{lemma}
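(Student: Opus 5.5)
The plan is to take $p$ to be the weighted median, that is, the first index at which the prefix sums reach half the total. Let $S_k=\sum_{i\le k}w_i$ with $S_0=0$ and $S_d=W$, and set
\[
  p=\min\{k : S_k\ge W/2\}.
\]
This index exists because $S_d=W\ge W/2$. We also have $p\ge1$: the weights are positive, so $W>0$, and therefore $S_0=0<W/2$.

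The next step is to check the two sides. The strict left weight is $S_{p-1}$. By the minimality of $p$, this equals $S_{p-1}<W/2$. The strict right weight is $W-S_p$. Since $S_p\ge W/2$, it satisfies $W-S_p\le W/2$. Both bounds hold, which proves the lemma.

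There is no real obstacle here. The only points needing care are that $p$ is well defined, which requires $d\ge1$ and positivity of the weights, and that the two inequalities are of different kinds: one is strict and the other is not. The statement only asks for ``at most'', so both cases are covered.

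For the construction remark that follows the lemma, the index $p$ can be found in $O(\log d)$ time by binary search over the stored prefix sums. Applying the lemma recursively to the contiguous subranges then builds the alphabetic tree used in Lemma~\ref{lem:local}.
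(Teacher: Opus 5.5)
Your proof is correct and is essentially the paper's argument: take the smallest $p$ whose prefix sum reaches $W/2$, use minimality for the strict left bound, and use $S_p\ge W/2$ for the right bound. Your explicit check that $p\ge1$ (from $d\ge1$ and positive weights) fills in a detail the paper leaves implicit.
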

\begin{proof}
Take the smallest $p$ with $\sum_{i\le p}w_i\ge W/2$.  Minimality gives $\sum_{i<p}w_i<W/2$, while
$\sum_{i>p}w_i=W-\sum_{i\le p}w_i\le W/2$.
\end{proof}

\paragraph{Construction.}
For an ordered interval list $(I_i,w_i)_{i=1}^d$, choose a weighted pivot, store that interval at the root, and recurse on the strict left and right subsequences.  Empty subsequences become null pointers.  A query at $I_p=[L_p,R_p)$ executes
\[
  x<L_p:\ \text{left},\qquad
  x\ge R_p:\ \text{right},\qquad
  L_p\le x<R_p:\ \text{return }p.
\]
For $d=0$, it returns failure.

\begin{lemma}[Point-location equivalence]
The weighted tree returns interval $I_i$ exactly when $x\in I_i$, and returns failure exactly when $x$ lies in no stored interval.
\end{lemma}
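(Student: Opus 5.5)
We argue by strong induction on the size of the stored subsequence, i.e. on the depth of the recursive construction. The invariant is that the intervals in the list are pairwise disjoint and listed in increasing order: $R_i\le L_{i+1}$ for consecutive indices. Disjointness comes from the model, since downward neighbors of a valid centroid have disjoint DFS intervals. The ordering is exactly the order the construction preserves, because it recurses on the strict left and right subsequences without reordering. The base case $d=0$ is immediate. The tree is a null pointer, the query returns failure, and no interval contains $x$.

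For the inductive step, let the root store the pivot $I_p=[L_p,R_p)$. There are three cases.

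\emph{Case $L_p\le x<R_p$.} The query returns $p$, and indeed $x\in I_p$. By disjointness, $x$ lies in no other interval, so this is the unique correct answer.

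\emph{Case $x<L_p$.} The query descends left. Every interval $I_i$ with $i>p$ has $L_i\ge R_p>L_p>x$, so $x\notin I_i$. Also $x\notin I_p$. Hence the set of intervals containing $x$ is the same within the left subsequence as within the whole list. The left subsequence is still ordered and disjoint, so the inductive hypothesis applies. It returns $I_i$ exactly when $x\in I_i$ for some $i<p$, and failure exactly when no interval of the whole list contains $x$.

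\emph{Case $x\ge R_p$.} This is symmetric. Every $I_i$ with $i<p$ has $R_i\le L_p<R_p\le x$, so $x\notin I_i$. The query therefore reduces to the right subsequence.

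The three cases cover all $x$ and are mutually exclusive, so in each case the output is the interval containing $x$ if one exists, and failure otherwise. That is the claimed biconditional. It also shows that gap queries (those between $R_i$ and $L_{i+1}$) and queries outside the hull of all intervals fail.

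The only step needing care is confirming that the ordered, disjoint invariant truly holds for the inherited interval lists. Half-open endpoints may coincide, with $R_i=L_{i+1}$, so the case inequalities must be taken strict or weak exactly as stated above. I would cite the disjointness of DFS intervals of distinct children, together with the fixed sort order used at construction, and note that the weights $w_i$ play no role in correctness. They affect only which pivot is chosen, never the point-location map.
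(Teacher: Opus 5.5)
Your proof is correct and follows the paper's approach: the paper's one-line observation that disjoint, ordered intervals make the comparison at $I_p$ discard no possible containing interval, followed by induction on the recursively stored subsequence (which also covers gap queries). You spell out the three pivot cases explicitly; the middle strict inequality $R_p>L_p$ in your chains can be weakened to $R_p\ge L_p$, so the argument does not need the intervals to be nonempty.
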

\begin{proof}
Because intervals are disjoint and ordered, every interval to the left of $I_p$ ends before $L_p$, and every interval to the right begins at or after $R_p$.  The comparison at $p$ therefore discards no possible containing interval.  Induction on the recursively stored subsequence proves the claim, including gaps between adjacent intervals.
\end{proof}

\begin{lemma}[Weighted depth]
For a nonempty list of positive integer weights with total $W$, the depth of interval $i$ is at most
$1+\lceil\log_2(W/w_i)\rceil$.  Every unsuccessful search has depth at most $1+\lceil\log_2 W\rceil$.
\end{lemma}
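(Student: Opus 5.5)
We argue by induction on depth, tracking the total weight of the subtree reached. Let $S_0$ be the full list and $S_\ell$ the subsequence stored in the subtree at depth $\ell+1$ along the search path, so the root of $S_\ell$ is the $(\ell+1)$-st interval record inspected. Let $W_\ell$ be the total weight of $S_\ell$. The construction places at the root of every subtree a pivot given by the Weighted pivot lemma applied to that subsequence. Each recursive step therefore passes to a strict left or strict right side, and so $W_{\ell+1}\le W_\ell/2$. By induction, $W_\ell\le W/2^\ell$ for every $\ell$ reached.

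For the successful bound, let interval $i$ sit at depth $D$, meaning it is the $D$-th record inspected. By the Point-location equivalence lemma, the search for any $x\in I_i$ follows exactly the root-to-$i$ path. The interval $I_i$ lies in every subsequence along that path, including $S_{D-1}$, whose root is $i$. Hence $w_i\le W_{D-1}\le W/2^{D-1}$. This gives $2^{D-1}\le W/w_i$, so $D-1\le\log_2(W/w_i)$ and $D\le 1+\lceil\log_2(W/w_i)\rceil$, since $D$ is an integer.

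For the unsuccessful bound, suppose the search inspects $D$ records and then follows a null pointer. The $D$-th inspected record is the root of the nonempty subsequence $S_{D-1}$. All weights are positive integers, so $W_{D-1}\ge 1$. The same halving inequality gives $1\le W/2^{D-1}$, hence $D\le 1+\lfloor\log_2 W\rfloor\le 1+\lceil\log_2W\rceil$. The empty list is excluded by hypothesis, and in any case it fails with zero probes.

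The only delicate point is the off-by-one bookkeeping. The pivot lemma bounds the strict sides, so the halving applies only to the subsequences entered after a probe; the record being probed at depth $D$ contributes no halving of its own. Lining up the depth with the number of halvings, namely $D-1$ halvings before the $D$-th probe, is therefore the step to check carefully. Everything else follows directly from the Weighted pivot lemma and positivity of the weights.
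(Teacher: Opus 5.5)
Your proposal is correct and follows essentially the same route as the paper's proof. Both arguments use the weighted pivot to halve the subtree weight at each of the $D-1$ descents, then bound the final subtree's weight below by $w_i$ in the successful case and by $1$ (positive integer weights) in the unsuccessful case. Your floor bound $1+\lfloor\log_2 W\rfloor$ is a harmless sharpening of the stated bound.
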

\begin{proof}
If interval $i$ is not the current pivot, the recursive side containing it has at most half the current total weight.  After $h-1$ descents, that total is at most $W/2^{h-1}$ but at least $w_i$, which gives the successful bound.  For an unsuccessful query that inspects $h$ records, the preceding $h-1$ descents leave, immediately before the final probe, a subtree of positive integer weight at least $1$ and at most $W/2^{h-1}$, which gives the unsuccessful bound.
\end{proof}

\paragraph{Global accounting.}
At current component $C_j$, the total weight of downward intervals is at most $m_j-1$.  If interval $i$ selects component $C_{j+1}$, then $w_i=m_{j+1}$ by construction.  Hence
\[
  q_j
  \le 1+\left\lceil\log_2\frac{m_j-1}{m_{j+1}}\right\rceil
  \le 2+\log_2m_j-\log_2m_{j+1}.
\]
All CST transitions decrease component size by at least a factor of two.  Successful downward transitions form a subset of those transitions, and all omitted log drops are nonnegative.  Their log-ratio terms are therefore bounded by the full telescoping sum
\[
  \sum_{j=0}^{r-1}\bigl(\log_2m_j-\log_2m_{j+1}\bigr)
  =\log_2m_0-\log_2m_r
  \le\log_2t.
\]
The additive constants, validity tests, and upward transitions contribute $O(\log t)$ more.  A failed downward lookup is terminal in the inherited search, so there is at most one final miss, also costing $O(\log t)$.

\paragraph{Preprocessing and space.}
Store prefix sums of the weights.  In a recursive subarray, binary search for the first cumulative weight reaching half of the subarray total.  This gives an $O(d\log d)$ construction and $O(d)$ temporary space from already sorted intervals.  A preorder layout needs one interval record, one subtree offset, and one component-size field per original interval.  Thus the replacement remains linear in the existing downward-interval representation, although the prototype has a larger constant factor (Appendix~\ref{app:evaluation}).

\section{Reachable direct-byte construction}
\label{app:witness}

We first give a directly executable family for the tested range.  The fixed-alphabet unbounded encoding follows in Appendix~\ref{app:macro}.

\paragraph{Tree topology.}
Let $n=2^k$ and $r_j=2^j-1$.  Index the nodes of $T_k$ by postorder positions $0,\ldots,n-1$.  We use Jiang and Gong's implementation convention: starting at the rooted component, descend only into a child strictly larger than half the component.  Since the heavy child of $T_j$ has size exactly $2^{j-1}=|T_j|/2$, this selects the newly added root at every level.  For $p\in[r_{j-1},r_j)$, set
\[
  \pi(p)=r_j,
\]
and set $\pi(n-1)=n-1$.  Thus $r_{j-1}$ is the root of the heavy copy $T_{j-1}$ below $r_j$, while the remaining nodes in $[r_{j-1},r_j)$ are leaves.

\paragraph{Target string and tokens.}
Choose bytes $a_0,\ldots,a_{n-2}$ from $\{1,\ldots,254\}$ so that adjacent pairs $(a_p,a_{p+1})$ are all distinct; an order-two de Bruijn sequence supplies such a segment in the executable range \citep{debruijn1946}.  Reserve byte $0$ as the terminal symbol and $255$ as left context.  Define
\[
  S=a_0a_1\cdots a_{n-2}0,
  \qquad
  U_p=S[p:n),
  \qquad
  P_{p,q}=S[p:q).
\]
The vocabulary contains all bytes, every required block $P_{p,\pi(p)}$, every suffix $U_p$, and the context token $255a_0$.

\paragraph{Position uniqueness.}
Because all adjacent pairs $(a_p,a_{p+1})$ are distinct, every substring of $a_0\cdots a_{n-2}$ of length at least two occurs at only one position: two occurrences would repeat its first adjacent pair.

We order the merge rules in three phases.

\paragraph{Phase 1: canonical blocks.}
For each level block $[b,q)=[r_{j-1},r_j)$ and $p=q-2,q-3,\ldots,b$, add
\[
  (a_p,P_{p+1,q})\mapsto P_{p,q}.
\]
The length-one block $P_{q-1,q}=a_{q-1}$ is atomic.  All block rules have higher priority than the remaining rules.

\paragraph{Phase 2: reserved context.}
Add
\[
  (255,a_0)\mapsto 255a_0.
\]

\paragraph{Phase 3: target suffixes.}
For $p=n-2,n-3,\ldots,0$, add
\begin{equation}
  (P_{p,\pi(p)},U_{\pi(p)})\mapsto U_p.
  \label{eq:target-merge}
\end{equation}
The root $U_{n-1}=0$ is atomic.

\begin{lemma}[Canonical blocks]
\label{lem:canonical-blocks}
Every $P_{p,q}$ introduced in Phase 1 is canonical before any context or target rule is applied.
\end{lemma}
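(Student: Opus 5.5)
The plan is to run BPE on the byte string $P_{p,q}=a_p\cdots a_{q-1}$ and maintain an explicit invariant on the token sequence. For a fixed level block $[b,q)=[r_{j-1},r_j)$ and $p\in[b,q)$, I will prove by induction on the number of merges performed that, while only Phase~1 rules can fire, the sequence always has the form
\[
  a_p,\;a_{p+1},\;\ldots,\;a_{s-1},\;P_{s,q}\qquad\text{for some } p\le s\le q-1.
\]
At the start $s=q-1$, and $P_{q-1,q}=a_{q-1}$ is atomic. The intended conclusion is that the unique applicable rule at each stage is $(a_{s-1},P_{s,q})\mapsto P_{s-1,q}$. Each such application decreases $s$ by one, so the process ends with the single token $P_{p,q}$, which shows that $P_{p,q}$ is canonical. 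This simultaneously covers every $p$ in the block, including $p=q-1$.

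First I discard the non-Phase~1 rules. The Phase~2 rule needs byte $255$, and the Phase~3 rules need an operand $U_{\pi(p')}$ that ends with the terminal $0$. Neither byte occurs in $a_p\cdots a_{q-1}$, because every $a_i\in\{1,\ldots,254\}$, so no such operand can appear as a token. The argument therefore only has to consider Phase~1 rules.

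Second, I show that no Phase~1 rule other than $(a_{s-1},P_{s,q})$ applies. Suppose a rule $(a_{p'},P_{p'+1,q'})$ is applicable somewhere in the current sequence. Its two operands occupy adjacent positions, so the concatenation $a_{p'}a_{p'+1}$ appears inside $a_p\cdots a_{q-1}$. This adjacent pair occurs only once in $a_0\cdots a_{n-2}$ (the position-uniqueness paragraph), so the occurrence sits at its true position in $S$. Hence $p\le p'$ and $q'\le q$, and $q'$ is a block endpoint $r_{j'}$. The blocks $[r_{j'-1},r_{j'})$ partition the positions, and $p'+1\in[p,q)\subseteq[b,q)$. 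It follows that $p'+1$ lies in block $j$, so $q'=q$.

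Third, I use the invariant to pin down where the right operand can sit. The only token of length at least two covering positions up to $q$ is $P_{s,q}$. Single-byte tokens are the individual bytes $a_i$ with $i<s$. The right operand $P_{p'+1,q}$ must therefore either equal the current last token, which forces $p'+1=s$, or be a single byte equal to $a_{q-1}$. In the single-byte case uniqueness places it at position $q-1$, which gives $s=q-1=p'+1$ again.

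The main obstacle is this anchoring step. Single bytes $a_i$ can repeat, since only adjacent pairs are distinct, so a left operand byte alone does not determine a position. The argument has to anchor every match on the adjacent pair formed by the left byte and the first byte of the right operand. It also has to use that distinct Phase~1 rule right-hand sides are distinct strings, so that there are no spurious matches against concatenations of tokens. Priorities then play no role, because at each stage exactly one rule is applicable.
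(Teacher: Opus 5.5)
Your proof is correct and follows essentially the same route as the paper: a right-to-left induction inside one level block, where uniqueness of adjacent byte pairs pins every candidate Phase~1 merge to its intended position, and the partition into level blocks rules out merges belonging to other blocks. Your version is more explicit than the paper's sketch: you track the entire token sequence, anchor every match on the pair formed by the left byte and the first byte of the right operand (rather than switching to substring position uniqueness for longer right operands), and exclude the context and target rules by byte content, so the rule priorities are never needed.
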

\begin{proof}
Within a level block, the rules build from right to left, so the right operand $P_{p+1,q}$ already exists when the rule for $P_{p,q}$ is considered.  If that operand has length one, uniqueness of $(a_p,a_{p+1})$ isolates the intended occurrence; if it is longer, position uniqueness isolates the right operand itself.  Since level blocks are disjoint and no Phase-1 rule crosses their boundary, induction on decreasing $p$ gives the unique intended block tokenization.
\end{proof}

\begin{lemma}[Canonical suffixes and exact successor edges]
Every $U_p$ is canonical, and the suffix-successor parent of $U_p$ is exactly $U_{\pi(p)}$.
\end{lemma}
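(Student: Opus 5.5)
We prove both claims together by strong induction on the suffix length $n-p$, that is, for $p=n-1,n-2,\ldots,0$. The base case is the atomic root $U_{n-1}=0$, which is canonical and has no successor parent. For the inductive step, fix $p<n-1$ and set $q=\pi(p)$. By construction $q\in\{r_j\}$ with $p\in[r_{j-1},r_j)$, so $p<q$. Since $U_p=P_{p,q}U_q$, the Phase-3 rule~\eqref{eq:target-merge} merges exactly these two pieces.

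\textbf{Step 1: the intermediate tokenization.} We first show that running BPE on the string $U_p=S[p:n)$ reaches the two-token sequence $(P_{p,q},U_q)$ before any other rule involving their junction can fire. Phase-1 rules have the highest priority, so they act first. By position uniqueness, every Phase-1 rule applies only at its intended location. No Phase-1 rule crosses a level-block boundary, and the string begins inside the block $[r_{j-1},r_j)$. Hence, by the argument of Lemma~\ref{lem:canonical-blocks} restricted to the suffix, the Phase-1 stage produces:
\begin{itemize}
\item the full block $P_{p,q}$, whose rules build leftward down to $p$;
\item the canonical blocks of every later level;
\item singleton root bytes $a_{r_j}$ and the final byte $0$.
\end{itemize}
Phase 2 never fires, because byte $255$ does not occur in $S$. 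The Phase-3 rules are ordered $p'=n-2,\ldots,0$. Those with $p'\ge q$ reconstruct $U_q$ from the right, exactly as in the inductive hypothesis applied to $S[q:n)$. The one thing to check here is that the tokenization of $S[q:n)$ obtained as a suffix of $S[p:n)$ agrees with its standalone tokenization. This holds because the left piece $P_{p,q}$ is a completed block and no Phase-1 or Phase-3 rule with $p'\ge q$ has left operand ending at $q-1$.

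\textbf{Step 2: the final merge and canonicality.} Now the rule $(P_{p,q},U_q)\mapsto U_p$ fires. Rules with $p'<p$ have lower priority, and no other rule can apply to a two-token sequence, so the output is the single token $U_p$. This proves that $U_p$ is canonical. Moreover, its last merge has right operand $U_q$.

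\textbf{Step 3: the successor edge.} The suffix-successor parent in Jiang and Gong's model is the right operand of the token's final merge, equivalently the longest proper canonical suffix appearing in its canonical decomposition. Step 2 shows this operand is $U_{\pi(p)}$.

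We also need to rule out any other canonical suffix $U_{p'}$ with $p<p'<q$ that could serve as parent. Such a $U_{p'}$ is canonical by the induction hypothesis, but it is not the right operand of the last merge, so under the normalized/proper definition it is not the parent. To be safe, we check properness: no vocabulary token straddles position $q$ except within the intended merges. This follows from position uniqueness, because every block and suffix token has a unique occurrence in $S$.

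\textbf{Main obstacle.} We expect the hardest part to be Step 1's claim that priorities never interleave badly. A lower-index Phase-3 rule could in principle match a partial configuration early. The first attempt to exclude this is to note two facts. The left operand $P_{p',\pi(p')}$ of any Phase-3 rule occurs only at position $p'$. Its right operand $U_{\pi(p')}$ must span through the terminal $0$. So the rule can fire only once everything to the right of $p'$ has collapsed into one token, which forces the intended bottom-up order.
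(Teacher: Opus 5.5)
Your proof is correct and is essentially the paper's argument: induction on decreasing $p$, Phase~1 priority contracting $S[p:\pi(p))$ to $P_{p,\pi(p)}$ before any target rule fires, the inductive hypothesis on $S[\pi(p):n)$, and the parent read off as the right operand of the final merge. Two small corrections. First, what rules out the higher-priority sibling rules for $p<p'<\pi(p)$ is this Phase~1 contraction, which has already absorbed their left operands $P_{p',\pi(p)}$ into $P_{p,\pi(p)}$; the position argument in your last paragraph does not do that. Second, Phase~1 leaves no singleton root bytes $a_{r_j}$, since each $r_j$ with $1\le j<k$ begins the level block $[r_j,r_{j+1})$ and is absorbed into $P_{r_j,r_{j+1}}$; this slip is harmless to the argument.
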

\begin{proof}
Proceed in decreasing $p$ and assume the claim for larger indices.  Phase 1 has higher priority than every target rule and canonically contracts $S[p:\pi(p))$ to $P_{p,\pi(p)}$, so no target rule beginning strictly inside this block can fire.  By induction, the remaining suffix $S[\pi(p):n)$ forms $U_{\pi(p)}$.  Immediately before the rule for $p$, the relevant boundary is therefore exactly $(P_{p,\pi(p)},U_{\pi(p)})$, and Rule~\eqref{eq:target-merge} creates $U_p$.  Rules for smaller indices have lower priority and cannot preempt it.  Removing the final canonical merge leaves its right operand $U_{\pi(p)}$, which is exactly the inherited suffix-successor parent.
\end{proof}

\begin{lemma}[No extra canonical suffix nodes]
The canonical vocabulary tokens that are suffixes of $S$ are exactly $U_0,\ldots,U_{n-1}$.
\end{lemma}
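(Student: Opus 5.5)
We show both inclusions.  For the easy direction, each $U_p$ is a suffix of $S$ and is canonical by the preceding lemma.  The work is the converse.  Let $V$ be a vocabulary token that is a suffix of $S$ and is canonical, meaning that applying the merge rules to the bytes of $V$ yields $V$ as a single token.  We must show $V=U_p$ for some $p$.

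We begin by excluding the non-$U$ tokens that are already suffix-shaped.  The vocabulary consists of single bytes, the blocks $P_{p,\pi(p)}$, the suffixes $U_p$, and the context token $255a_0$.  The context token contains byte $255$, which never occurs in $S$, so it is not a suffix of $S$.  Among single bytes, only $0=U_{n-1}$ is a suffix of $S$, since $S$ ends in the reserved terminal byte $0$.  Every block $P_{p,q}$ has $q\le n-1$, so it lies inside $a_0\cdots a_{n-2}$ and does not contain byte $0$.  Since every nonempty suffix of $S$ ends in $0$, no block is a suffix of $S$.  Hence every vocabulary token that is a suffix of $S$ already has the form $U_p$.

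With this, the statement reduces to the preceding lemma, which supplies canonicality.  The one subtlety I expect is what ``canonical vocabulary token'' means in Jiang and Gong's model.  If suffix-tree nodes are indexed by distinct token identities, we must also rule out a second token whose byte string equals some $U_p$ but which arises through a different merge.  Every merge rule produces a distinct output string, since the outputs of Phases 1--3 have distinct byte contents by position uniqueness and the lengths and markers involved.  So each string is the output of at most one rule, and the token for $U_p$ is unique.

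The main obstacle is therefore only bookkeeping: confirming that the blocks never contain the terminal byte (the index range $q=\pi(p)\le n-1$ and $P_{p,q}=S[p:q)$ guarantee this), and confirming that the vocabulary contains nothing beyond the four listed kinds of token.
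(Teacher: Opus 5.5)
Your proposal is correct and follows essentially the same route as the paper: list the four kinds of vocabulary token, exclude the context token because byte $255$ never occurs in $S$ and the Phase-1 blocks because they lack the terminal byte $0$, leaving only $U_{n-1}=0$ and the target suffixes, whose canonicality comes from the preceding lemma. Your explicit treatment of single bytes and your remark that distinct merge rules produce distinct output strings are small refinements that the paper's proof leaves implicit.
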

\begin{proof}
The terminal byte $0$ occurs only at the end of $S$, and every intended $U_p$ ends in it.  A Phase-1 block contains no terminal byte, so it cannot equal a nontrivial suffix of $S$; position uniqueness also prevents it from coinciding with a differently positioned target substring.  The context token begins with reserved byte $255$, which does not occur in $S$.  All remaining vocabulary items are atomic bytes or the intended target suffixes.  Hence no additional nontrivial canonical token enters the suffix-successor tree of $S$.
\end{proof}

\begin{lemma}[Interval order and local cost]
At the centroid corresponding to $U_q$ for level $j$, the downward intervals are ordered
\[
  U_{q-1},U_{q-2},\ldots,U_b
\]
and have component weights $[1,\ldots,1,2^{j-1}]$, with the heavy interval last.
\end{lemma}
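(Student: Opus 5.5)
The plan has three steps: identify the downward neighbors of the centroid $U_q$ together with their components, order their intervals through the DFS order of the left operands, and read off the weights from the tree topology.

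\textbf{Step 1: the centroid and its neighbors.} By the tree-topology paragraph, the node $q=r_j$ is the centroid selected at level $j$. Every $p\in[b,q)=[r_{j-1},r_j)$ has $\pi(p)=q$. By the lemma on canonical suffixes and exact successor edges, the children of $U_q$ are exactly $U_p$ for $p\in[b,q)$, and no other node has parent $U_q$.

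Removing $U_q$ from the current component $T_j$ leaves one component per child. The child $U_b=U_{r_{j-1}}$ is the root of the heavy copy $T_{j-1}$, so its component has size $2^{j-1}$. Each $p\in(b,q)$ is a leaf, so its component has size $1$. This gives the multiset of weights. It also uses that the recursive component at level $j$ is precisely $T_j$: earlier levels descended into the heavy child each time, and the parent side has already been cut off by the CST.

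\textbf{Step 2: ordering the intervals.} This is the main obstacle. In Jiang and Gong's model, the downward interval attached to child $U_p$ is the DFS interval of the left operand of its final merge. That merge is \eqref{eq:target-merge}, whose left operand is $P_{p,q}$, so I need the DFS order of $B_p=P_{p,q}$ in the successor forest over left operands.

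Phase 1 builds $P_{p,q}$ by the rule $(a_p,P_{p+1,q})$. Its final merge therefore has right operand $P_{p+1,q}$, which makes $P_{p+1,q}$ the suffix-successor parent of $P_{p,q}$. This produces the chain $B_{q-1}\leftarrow B_{q-2}\leftarrow\cdots\leftarrow B_b$ with $B_{q-1}=a_{q-1}$ atomic. On a chain, preorder DFS numbers increase with depth, so $\operatorname{dfs}(B_{q-1})<\cdots<\operatorname{dfs}(B_b)$.

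The validity intervals are subtree intervals on this chain, so they are nested, not disjoint. I would therefore use the disjoint ordered intervals that the inherited construction stores, namely the nested intervals with the deeper child's subtree removed. These are ordered by their left endpoints $\operatorname{dfs}(B_p)$, and that yields the order $U_{q-1},\ldots,U_b$. I would cite this convention directly from Jiang and Gong's interval representation.

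To make sure the chain is exactly this one, I would check two things. First, no other token has a final merge whose right operand is some $P_{p',q}$; this follows from position uniqueness and the lemma on no extra canonical suffix nodes. Second, Lemma~\ref{lem:canonical-blocks} guarantees that each block is canonical with the stated final merge.

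\textbf{Step 3: reading off the weights.} Combining the order from Step 2 with the component sizes from Step 1 gives the weights $[1,\ldots,1,2^{j-1}]$, with the heavy interval $U_b$ last. The list has $q-b-1=2^{j-1}-1$ ones, which matches the count stated in Section~\ref{sec:witness}.
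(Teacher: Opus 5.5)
Your proposal is correct and follows essentially the paper's argument: the Phase-1 chain $B_{q-1}\leftarrow\cdots\leftarrow B_b$ places the left operands $\operatorname{pre}(U_p)=B_p$ in increasing preorder, the inherited intervals are ordered by that coordinate, and the weights $[1,\ldots,1,2^{j-1}]$ are read off from $\pi$ (you spell out the weights more fully than the paper does). Two asides are slightly off but not load-bearing: the inherited intervals are disjoint because of BPE semantics, not because a storage convention trims nested subtree intervals (for $p>b$, a retained state in the subtree of $B_{p-1}$ invalidates $U_p$, since the Phase-1 merge $(a_{p-1},B_p)$ outranks $(B_p,U_q)$); and your right-operand uniqueness check is unnecessary, because the preorder comparison only uses the ancestor relations, and it is not quite true for the atomic $B_{q-1}=a_{q-1}$, since only adjacent pairs, not single bytes, are distinct in $S$.
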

\begin{proof}
Write $B_p=P_{p,q}$.  The Phase-1 construction gives the ancestor chain
\[
 B_{q-1}\leftarrow B_{q-2}\leftarrow\cdots\leftarrow B_b
\]
in the global successor forest, hence
$\operatorname{dfs}(B_{q-1})<\cdots<\operatorname{dfs}(B_b)$.
For target $U_p$, the left operand of its final canonical merge is $\operatorname{pre}(U_p)=B_p$.  Jiang and Gong's validity interval is ordered by the corresponding DFS coordinate, so the intervals occur in the displayed order.  By the parent map $\pi$, $U_b$ roots the copy of $T_{j-1}$ of size $2^{j-1}$, while every other child is a leaf.  Thus the heavy interval is extreme.  A count-balanced predecessor tree needs $\Theta(j)$ probes to reach it among $2^{j-1}$ intervals, while the weighted pivot selects it at the root.
\end{proof}

\begin{lemma}[Reachable heavy-path update]
On the stream $255S$, the final append initializes the inherited update at $U_0$, the context rule invalidates $U_0$, and the true final token is $U_1$.  The CST search follows the heavy components of $T_k$.
\end{lemma}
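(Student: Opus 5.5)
I would prove the lemma in three stages: the canonical tokenization of $255S$ and of its proper prefixes, the behavior of the inherited update at the final append, and the centroid path that the CST search then follows.

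\textbf{Stage 1: the canonical tokenization.} Apply the merge rules to $255S$ in priority order.

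Phase 1 rules have highest priority. By position uniqueness and Lemma~\ref{lem:canonical-blocks}, they contract each level block $S[r_{j-1}:r_j)$ exactly as they do inside $S$. The leading $255$ has no Phase-1 rule, and byte $255$ does not occur in $S$, so no block rule crosses the boundary between $255$ and $a_0$.

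The next rule by priority is the context rule $(255,a_0)\mapsto 255a_0$. I must check that the first block begins with atomic $a_0$ when this rule is considered. The first level block is $[r_0,r_1)=[0,1)$, so $P_{0,1}=a_0$ is a length-one atomic block. Hence the pair $(255,a_0)$ is present, and the context rule fires before any Phase-3 rule.

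This consumes $a_0$. The target rule for $p=0$, $(P_{0,1},U_1)\mapsto U_0$, therefore never fires. The remaining suffix $S[1:n)$ is contracted by the Phase-3 rules to $U_1$, by the same decreasing-$p$ induction as in the lemma on canonical suffixes. None of those rules involves position $0$, and they fire in the same relative order as in $S$.

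The canonical tokenization of $255S$ is therefore $255a_0\cdot U_1$, and the final token is $U_1$. Proper prefixes must also be reachable states. No prefix state is special here, since the stream is a byte stream and every prefix is valid input.

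\textbf{Stage 2: the inherited update.} By the lemma on extra canonical suffix nodes, the longest vocabulary suffix matched at the final append is $U_0=S$, because the context token contains $255$. So the Aho--Corasick matching initializes the candidate tree at $U_0$ with $m_0=n$.

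Validity of $U_0$ requires that the retained prefix state before its final merge be consistent with the left operand $\operatorname{pre}(U_0)=P_{0,1}=a_0$. In the actual history, $a_0$ was absorbed into $255a_0$. The retained state at position $1$ is therefore the token $255a_0$, whose DFS coordinate lies outside $U_0$'s validity interval, so $U_0$ is invalid.

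$U_1$ is valid: its left operand is $P_{1,\pi(1)}$, and that block is exactly the canonical history token at that boundary. Its ancestors $U_{\pi(1)},\ldots$ are valid as well. By the monotonic-path theorem, the valid set is the root-to-$U_1$ path and the deepest valid node is $U_1$. This is consistent with Stage 1.

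\textbf{Stage 3: the CST trace.} Run the strict-half centroid convention on $T_k$. By the tree-topology paragraph, the centroid of $T_j$ is its new root $r_j$.

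That centroid lies on the valid path, since the root of each nested $T_j$ is an ancestor of $U_1$. The query point is the retained state for $U_1$, which lies in the subtree interval of the heavy child. I will check that $U_1$ lies in every heavy copy: $1\in[r_{j-1},r_j)$ only for $j=1$, so $U_1$ is a descendant through $r_1, r_2,\ldots$.

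Hence the downward lookup at each level selects the heavy interval $U_b$, of weight $2^{j-1}$, and the search recurses into $T_{j-1}$. After $k$ levels it reaches $U_1$ and terminates. By the interval-order lemma, each of these selections hits the extreme interval.

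\textbf{Main obstacle.} The delicate step is the validity argument in Stage 2. It depends on the exact form of Jiang and Gong's validity test, namely in which DFS interval the retained prefix state must lie. I would first unfold their definition for the candidates $U_0$ and $U_1$ and check that $255a_0$ falls outside $U_0$'s interval while the appropriate block token falls inside $U_1$'s. If that unfolding does not work directly, the fallback is to argue from the Stage 1 tokenization together with the inherited correctness theorem: the inherited update is known to return the true final token, which is $U_1$, and the monotonic-path theorem then forces the search path described in Stage 3.
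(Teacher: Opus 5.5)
Your proposal is correct and follows essentially the paper's route: the context rule fires on the atomic first block $P_{0,1}=a_0$, so the true final token is $U_1$ while Aho--Corasick reports $U_0$, and the inherited monotonic-path search then descends through the heavy roots (the paper's proof is in effect your Stage~1 plus your stated fallback, without unfolding the validity intervals). There are two small slips in Stage~3: $1=r_1\in[r_1,r_2)$, so the relevant index is $j=2$, not $j=1$; and the heavy interval is taken only for $j=k,\ldots,2$, i.e.\ $k-1$ successful descents, after which $U_1$ is the centroid of $T_1$ and the lookup into its single child $U_0$ misses, which is what terminates the search.
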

\begin{proof}
On the final append, augmented Aho--Corasick reports $U_0=S$ as the longest vocabulary token that is a suffix of the raw byte string.  Without reserved context, $U_0$ is also canonical.  On $255S$, however, the higher-priority context merge consumes $(255,a_0)$, so the Phase-3 rule that would form $U_0$ is no longer applicable.  The suffix beginning at position one is untouched and canonically forms $U_1$.  Thus $U_0$ is the initial candidate while $U_1$ is the true last token.  The inherited monotonic-path search follows that path, entering the heavy child at every new root and ending with the degree-one miss at the bottom.
\end{proof}

\paragraph{Exact executable counts.}
The search takes the heavy interval successfully at levels $j=k,k-1,\ldots,2$, then performs a one-interval miss at the bottom.  Under the pinned count-balanced predecessor implementation, the successful depths sum to $\sum_{j=2}^k j=k(k+1)/2-1$.  The terminal miss reads the sole interval during predecessor search and once more for its right endpoint, contributing two probes.  Thus
\[
  Q_{\mathrm{binary}}(k)=\frac{k(k+1)}{2}+1.
\]
The weighted index reads one record at each of the $k-1$ successful levels and one at the terminal miss, so $Q_{\mathrm{weighted}}(k)=k$.  The exact additive constants are implementation-specific; the $\Theta(k^2)$ lower bound only requires that the chosen interval be extreme among $2^{j-1}$ ordered intervals.

\section{Fixed-alphabet unbounded encoding}
\label{app:macro}

The direct construction uses one raw label per position and is intentionally finite.  We now encode those labels over the fixed alphabet
\[
  \{0,1,\#,\$,!,\contextbyte\}.
\]
For $p\in\{0,\ldots,n-2\}$, let
\[
  A_p=\#\operatorname{bin}_k(p)\$,
  \qquad
  S'=A_0A_1\cdots A_{n-2}!.
\]
All binary codes have length $k$.

First assign higher priority to rules that build every code suffix $x\$$ from right to left, in increasing suffix length, and then merge
\[
  (\#,\operatorname{bin}_k(p)\$)\mapsto A_p.
\]
Identical code-suffix strings share one rule.  No high-priority rule has $\$$ as a left operand or $\#$ as a right operand, so no rule crosses a $\$\#$ macro boundary.  Each macro therefore becomes its unique token $A_p$ before any simulated block or target rule applies.

Next add the Phase-1 block rules and Phase-3 target rules from Appendix~\ref{app:witness}, treating each $A_p$ as one symbol and \texttt{!} as the terminal symbol.  Place the context rule
\[
  (\contextbyte,A_0)\mapsto \contextbyte A_0
\]
between those two phases.

\begin{lemma}[Macro simulation]
After the code-building phase, the canonical tokenization of $S'$ is
$A_0A_1\cdots A_{n-2}\mathtt{!}$, and no code-building token crosses a $\$\#$ boundary.  Subsequent simulated block, context, and target rules produce exactly the direct construction's token-level execution with each $a_p$ replaced by $A_p$.
\end{lemma}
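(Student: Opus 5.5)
The plan is to prove the Macro simulation lemma in two stages: first establish what the code-building phase produces on $S'$, then show that everything after it is a symbol-for-symbol relabeling of the direct construction.

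\textbf{Stage 1: code-building phase.} I would argue by induction along the priority order of the code-building rules. Each rule builds a suffix $x\$$ of some binary code from right to left, with suffixes taken in increasing length. Two facts drive the induction.
\begin{itemize}
\item The symbols $\$$ and $\#$ each occur exactly once per macro, and no high-priority rule has $\$$ as a left operand or $\#$ as a right operand. So no merge can ever join a token ending in $\$$ to a token beginning with $\#$. The $\$\#$ boundaries, together with the final $\$!$ boundary since $!$ occurs in no code rule, are never crossed.
\item Consequently each segment $\#\operatorname{bin}_k(p)\$$ evolves independently. Within it, the rule for a suffix $x\$$ of length $\ell+1$ fires exactly when the current tokens are the atom $c$ followed by the already-built token $x'\$$, where $x=cx'$. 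The inductive hypothesis supplies that $x'\$$ is already a single token, because shorter suffixes have higher priority.
\end{itemize}
The point to check is that a code-suffix rule cannot fire in an unintended position inside the segment. This holds because every such rule's right operand ends in $\$$, so it can only be the rightmost token of the segment. Sharing rules across identical suffix strings is harmless, since the rule's effect depends only on its operands. After all code suffixes are built, the rule $(\#,\operatorname{bin}_k(p)\$)\mapsto A_p$ fires once in each segment. Hence the tokenization is $A_0\cdots A_{n-2}!$.

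\textbf{Stage 2: simulation.} All subsequent rules have lower priority and treat each $A_p$ and $!$ as atomic operands. None of them uses a partial code token: their operands are $A_p$, $!$, $\contextbyte$, or tokens built from these. So the remaining BPE run on $A_0\cdots A_{n-2}!$ (resp.\ $\contextbyte S'$) is the image of the direct run on $S$ (resp.\ $255S$) under the homomorphism $a_p\mapsto A_p$, $0\mapsto!$, $255\mapsto\contextbyte$. BPE merge semantics depend only on symbol identities and rule priorities, and the homomorphism is injective on the symbols used, with the $A_p$ pairwise distinct because the codes $\operatorname{bin}_k(p)$ are distinct.

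\textbf{Main obstacle.} The main obstacle is the interaction on the stream $\contextbyte S'$. There, $\contextbyte$ precedes $\#$, and I need that no code rule touches $\contextbyte$, which holds since $\contextbyte$ appears in no code rule. I also need that the context rule cannot fire until $A_0$ is complete. This follows because its right operand is $A_0$ and it has lower priority than all code rules.

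The position-uniqueness hypothesis used in Appendix~\ref{app:witness} transfers automatically. Distinct adjacent pairs $(a_p,a_{p+1})$ map to distinct pairs $(A_p,A_{p+1})$, so Lemma~\ref{lem:canonical-blocks} and the successor-edge lemmas apply verbatim at the token level.
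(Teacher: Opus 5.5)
Your proposal is correct and follows essentially the paper's own argument: no code rule can join a token ending in $\$$ to one beginning with $\#$ (or touch $!$ or $\contextbyte$); suffixes are built right to left in increasing length, so every $A_p$ forms before any lower-priority rule fires; and the remaining rules are an isomorphic copy of Appendix~\ref{app:witness}, with macro-level adjacent pairs unique because the $A_p$ are distinct. One small precision: the direct bytes $a_p$ can repeat (only adjacent pairs are distinct), so $a_p\mapsto A_p$ is a positional relabeling, not an injective letter map, and the correspondence rests on position uniqueness (your last paragraph) rather than on injectivity of a symbol homomorphism.
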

\begin{proof}
Code-suffix rules are ordered by increasing suffix length, so each right operand exists before the rule that extends it to the left.  Identical suffix strings safely share a rule.  No code rule has $\$$ as a left operand or $\#$ as a right operand, so none crosses a macro boundary.  Since every code has length $k$, the full token $\operatorname{bin}_k(p)\$$ begins immediately after its preceding $\#$, and the next rule forms exactly $A_p$.  All simulated rules have lower priority and therefore first see the displayed macro sequence.  Their operands and rule order are isomorphic to Appendix~\ref{app:witness}; macro-level adjacent pairs are unique because the $A_p$ are distinct.  Internal code tokens end at an internal $\$$ and cannot be suffix nodes of the target ending in \texttt{!}, while $\contextbyte$ occurs nowhere in the target.  Canonical blocks and suffixes, the exact suffix-node set, interval order, and reachability therefore transfer.  Every operand is canonical when used, so the dictionary remains normalized and proper.
\end{proof}

The maximum target length in raw symbols is
\[
  t=|S'|=(n-1)(k+2)+1=\Theta(k2^k).
\]
Hence $\log t=\Theta(k)$, which converts the executable $\Theta(k^2)$ versus $\Theta(k)$ separation into the fixed-alphabet $\Theta(\log^2t)$ versus $\Theta(\log t)$ statement of Proposition~\ref{prop:witness}.

\section{Implementation and evaluation details}
\label{app:evaluation}

\paragraph{Platform and protocol.}
Experiments use an Apple M4 (10 cores, 16 GB), Darwin 25.1.0, Rust 1.97.1, and Cargo \texttt{--release} with \texttt{opt-level=3} and default target features.  A single otherwise-idle process is used; macOS provides no supported CPU-affinity control, so runs are unpinned.  Corpus timing uses one warm-up and five repetitions.  Final-update timing uses one warm-up batch and 31 repetitions and reports medians; initialization reports the median and quartiles of nine separate-process builds.  Inputs are pinned one-million-byte prefixes of Project Gutenberg's \emph{War and Peace} and 169 Rust files from TheAlgorithms/Rust.

\paragraph{Instrumentation and certificates.}
One record probe means reading one stored interval.  Endpoint comparisons are counted separately because a weighted miss may test both boundaries of several records.  The generator checks every intended token's canonicality, every prescribed successor edge, the exact canonical-suffix set, interval disjointness and order, weighted-tree permutation and acyclicity, CST component halving, stored-weight equality, the forced $U_0\to U_1$ update, and offline BPE equality on every witness prefix.  Property tests additionally cover empty lists, gaps, boundaries, equal weights, skewed weights, and randomized ordered interval sets.

\begin{table}[H]
\centering
\caption{Search-conditional tails over one million updates.  Each tuple is p50/p95/p99/p99.9/max.}
\label{tab:tails}
\scriptsize
\setlength{\tabcolsep}{2.0pt}
\begin{tabular}{llcccc}
\toprule
Vocabulary & Corpus & \multicolumn{2}{c}{Record probes} & \multicolumn{2}{c}{Endpoint comparisons}\\
& & Binary & Weighted & Binary & Weighted\\
\midrule
GPT-2        & English & 2/4/5/5/7 & 1/2/2/3/4 & 2/4/5/5/7 & 2/4/4/5/7\\
GPT-2        & Code    & 2/4/5/6/7 & 1/2/2/3/4 & 2/4/5/6/7 & 2/4/4/6/8\\
RoBERTa-base & English & 2/4/5/5/7 & 1/2/2/3/4 & 2/4/5/5/7 & 2/4/4/5/7\\
RoBERTa-base & Code    & 2/4/5/6/7 & 1/2/2/3/4 & 2/4/5/6/7 & 2/4/4/6/8\\
GPT-NeoX-20B & English & 2/4/5/5/7 & 1/2/2/3/3 & 2/4/5/5/7 & 2/4/4/5/6\\
GPT-NeoX-20B & Code    & 2/4/5/6/8 & 1/2/2/3/4 & 2/4/5/6/8 & 2/4/4/6/7\\
\bottomrule
\end{tabular}
\end{table}

Weighted search reduces p99 endpoint comparisons from five to four in every corpus.  On the adversarial $t=8192$ update, where searches succeed along the heavy path, the endpoint counts are 92 and 14.  At 245 checkpoint prefixes for each vocabulary/corpus pair, both modes match offline heap BPE, totaling 1,470 retained-state audits; the witness certificate checks every reachable witness prefix.

\begin{table}[H]
\centering
\caption{Ordinary-vocabulary results over one million updates per row.
Each paired value is count-balanced $\to$ component-weighted.  Probes are
conditional on entering interval search; time measures the complete update.}
\label{tab:corpus}
\scriptsize
\setlength{\tabcolsep}{3.0pt}
\begin{tabular}{llrrccc}
\toprule
Vocabulary & Corpus & Search (\%) & Max degree &
p99 probes & Max probes & ns/update\\
\midrule
GPT-2        & English &  9.53 & 4 & $5\to2$ & $7\to4$ & $13.15\to13.72$\\
GPT-2        & Code    &  7.40 & 4 & $5\to2$ & $7\to4$ & $9.49\to9.67$\\
RoBERTa-base & English &  9.53 & 4 & $5\to2$ & $7\to4$ & $13.27\to14.65$\\
RoBERTa-base & Code    &  7.40 & 4 & $5\to2$ & $7\to4$ & $10.22\to10.25$\\
GPT-NeoX-20B & English &  9.57 & 4 & $5\to2$ & $7\to3$ & $12.78\to14.45$\\
GPT-NeoX-20B & Code    & 12.26 & 5 & $5\to2$ & $8\to4$ & $10.57\to11.11$\\
\bottomrule
\end{tabular}
\end{table}

\paragraph{Construction and memory.}
Table~\ref{tab:memory} compares separate binary-only and weighted-only builds.  The weighted build still retains sorted interval endpoints because they are needed to handle gaps.  Thus ``weighted'' refers to the replacement index, not to two search trees stored together.

\begin{table}[H]
\centering
\caption{Separate-process initialization.  Memory is decimal MB; times are median [q1,q3] over nine builds.}
\label{tab:memory}
\scriptsize
\setlength{\tabcolsep}{2.5pt}
\begin{tabular}{lrrrrrr}
\toprule
& \multicolumn{2}{c}{Retained MB} & \multicolumn{2}{c}{Search bytes/interval} & \multicolumn{2}{c}{Initialization ms}\\
Vocabulary & Binary & Weighted & Binary & Weighted & Binary & Weighted\\
\midrule
GPT-2        &38.50&44.71&144.71&190.17&41.21 [40.05,43.16]&49.49 [49.30,50.93]\\
RoBERTa-base &38.50&44.71&144.71&190.17&41.72 [40.88,42.08]&49.58 [48.37,49.84]\\
GPT-NeoX-20B &38.32&44.60&142.17&186.87&39.82 [39.60,39.87]&54.48 [52.79,55.23]\\
\bottomrule
\end{tabular}
\end{table}

Measured peak heap is 46.4--46.8 MB for binary and 52.7--53.0 MB for weighted.  Corpus preparation and vocabulary normalization occur before the timed online loops.

\paragraph{Low-degree hybrid.}
A practical variant linearly scans degrees at most eight and uses the weighted tree only above that threshold.  A fixed threshold preserves the $O(\log t)$ theorem because each CST level pays only $O(1)$ for the scan.  All sampled public-vocabulary queries fall on the linear path, while the adversarial family eventually enters the weighted path.  We treat this hybrid as an engineering option rather than part of the asymptotic contribution or the pure-search comparisons in the main paper.

\end{document}